\def\complex{\mathbb{C}}
\def\natural{\mathbb{N}}
\def\integer{\mathbb{Z}}
\def\to{\rightarrow}
\newcommand{\microspace}{\mspace{0.5mu}}
\def \lket {\left|}
\def \rket {\right\rangle}
\def \lbra {\left\langle}
\def \rbra {\right|}
\newcommand{\ket}[1]{\lket\microspace #1 \microspace\rket}
\newcommand{\bra}[1]{\lbra\microspace #1 \microspace\rbra}
\newcommand{\kb}[1]{\ket{#1}\bra{#1}}
\newcommand{\1}{\mathbb{1}}
\newcommand{\cliff}[1]{\C^{( #1 )}}
\newcommand\gen[1]{\langle #1 \rangle}
\newcommand{\phase}[2]{\operatorname{P}_{#1}(#2)}
\newcommand{\wt}[1]{\operatorname{wt}\left( #1 \right)}
\def\C{\mathcal{C}}
\def\D{\mathcal{D}}
\def\H{\mathcal{H}}
\def\P{\mathcal{P}}
\def\S{\mathcal{S}}
\def\embeds{\hookrightarrow}
\def\conj{^{\dagger}}
\def\a{\mathbf{a}}
\def\b{\mathbf{b}}
\def\c{\mathbf{c}}
\def\e{\mathbf{e}}
\def\j{\mathbf{j}}
\def\v{\mathbf{v}}
\def\0{\mathbf{0}}
\newtheorem{corollary}{Corollary}
\newtheorem{definition}{Definition}
\newtheorem{lemma}{Lemma}
\newtheorem{theorem}{Theorem}
\newtheorem*{theorem*}{Theorem}
\newtheorem*{lemma*}{Lemma}
\begin{document}
\title{Diagonal gates in the Clifford hierarchy}

\author{Shawn X. \surname{Cui}}
\affiliation{University of California Santa Barbara, USA}
\affiliation{Stanford University, Palo Alto, USA}

\author{Daniel \surname{Gottesman}}
\affiliation{Perimeter Institute for Theoretical Physics, Waterloo, Canada}
\affiliation{CIFAR QIS Program, Toronto, Canada}

\author{Anirudh \surname{Krishna}}
\affiliation{Perimeter Institute for Theoretical Physics, Waterloo, Canada}
\affiliation{JARA IQI, RWTH Aachen University, Aachen, Germany}

\begin{abstract}
The Clifford hierarchy is a set of gates that appears in the theory of fault-tolerant quantum computation, but its precise structure remains elusive.
We give a complete characterization of the diagonal gates in the Clifford hierarchy for prime-dimensional qudits.  They turn out to be $p^m$-th roots of unity raised to polynomial functions of the basis state to which they are applied, and we determine which level of the Clifford hierarchy a given gate sits in based on $m$ and the degree of the polynomial.
\end{abstract}

\maketitle
\section{Introduction}

We expect that to build a large quantum computer, some sort of fault-tolerant
encoding will be necessary in order to deal with imperfections in quantum memories
and quantum gates.  Arguably the central result in the theory of fault-tolerant
quantum computation, the threshold theorem guarantees that it is possible to
construct reliable fault-tolerant quantum circuits provided the errors in state
preparation, gates, and measurements are below a certain threshold error rate.

The central idea behind this theorem is to encode quantum information into quantum
error correcting codes, the most common being stabilizer codes. To process
information, we can choose to use a transversal gate architecture and
measure Pauli observables. These gates prevent errors on a physical qubit
from spreading to others within an encoded block. Unfortunately, these gates alone
are insufficient to achieve universal quantum computation \cite{EastinKnill}.

Magic state injection is one common approach to overcome this limitation. Gottesman
and Chuang \cite{GottesmanChuang} explored what gates could be implemented via
teleportation-based state injection. They showed that there existed a class of gates
called the Clifford hierarchy that is intimately connected to fault tolerance and
state injection. The connection between state injection and the third level of the
Clifford hierarchy has been subsequently explored in \cite{Howard12, Bengtsson14,
Howard15}.  The Clifford hierarchy is also important in understanding the possible
transversal gates on stabilizer codes \cite{BravyiKonig,AndersonJochymOConnor}.
Although previous attempts have been made in \cite{Zeng08}, the full structure of
gates within the Clifford hierarchy is still not known.

In this paper, we make partial progress towards answering this question by giving
a complete characterization of the \emph{diagonal} gates in every level of the
Clifford hierarchy.  We focus on prime-dimensional qudits, but the result also
applies to qudits of prime-power dimension $p^r$ with a standard choice of Pauli
group, since their Clifford group and Clifford hierarchies are isomorphic to those of
$r$ $p$-dimensional qudits.  In particular, we show that if $U$ is a diagonal gate in
any level of the Clifford hierarchy for qudits of dimension $p$, it can be written as
\begin{equation}
  U = \sum_{j \in \integer_p} \exp \left(2\pi i \sum_m \delta_m(j)/p^m \right) \kb{j},
\end{equation}
where $\delta_m(j)$ is a polynomial over $\integer_{p^m}$ (a multivariate polynomial in
the case of multiple qudits).  The level of the Clifford hierarchy in which it
appears is determined by the largest value of $m$ that appears in the sum and the
degree of $\delta_m(j)$ for that $m$.

Section~\ref{sec:background} reviews some background material and establishes
terminology.  In section \ref{sec:singleQudit}, we prove the theorem for a single
qudit. We generalize this result to $n$ qudits in  section \ref{sec:nQudits} and make
some final comments in section~\ref{sec:conclusion}.

\section{Background}
\label{sec:background}
A single qudit of prime dimension $p$ is associated with the complex Euclidean space
$\complex^{p}$. Let $\omega = \exp\left( 2\pi i/p \right)$ denote the $p$-th root of
unity. The matrices $X$ and $Z$ are defined by their action on $\complex^{p}$: for $j
\in \integer_{p}$,
\begin{align}
  X \ket{j} = \ket{j + 1},\qquad Z\ket{j} = \omega \ket{j}~,
\end{align}
where the addition is performed with respect to the field $\integer_{p}$.

We will be dealing in this paper not just with powers of $\omega$, but with powers of
$\exp(2\pi i/p^m)$.


Let $\P$ denote the single qudit Pauli group
\begin{align}
  \P =
  \begin{cases}
    \gen{i\1, X, Z} &\mbox{if }p=2\\
    \gen{\omega \1, X, Z} &\mbox{if }p>2~.
  \end{cases}
\end{align}
We associate with $n$ qudits the Hilbert space $\H = \left(\complex^{p}\right)^{\otimes n}$. $\P_n := \P^{\otimes n}$
refers to the $n$-qudit Pauli group.  The Pauli group defines the first level in the Clifford hierarchy: $\cliff{1} = \{e^{i\phi}\} \cdot \P_n$.  We have added all global phases for later convenience.  We define
\begin{equation}
X(\v) = \bigotimes_{i=1}^n X^{v_i}
\end{equation}
and similarly for $Z(\v)$.  Here, $\v$ is an element of
$\integer_p^n$, an $n$-dimensional vector over $\integer_p$.

The group of automorphisms of the Pauli group is called the Clifford group and is
denoted $\cliff{2}$. These gates play a central role in the theory of quantum error
correction and fault tolerance. However, circuits composed entirely of gates
from $\cliff{2}$ are not universal for quantum computation.

To get around this problem, we need gates from the third level of the Clifford
hierarchy, $\cliff{3}$, defined as
\begin{align}
  \cliff{3} := \{U | UPU\conj \in \cliff{2},\; \forall P \in \P_n\}~.
\end{align}
Any gate from this set can be used to construct a universal quantum circuit in
conjunction with the Clifford group.

This can be generalized to define $\cliff{k}$, the $k^{th}$ level of the Clifford
hierarchy on $\H$
\begin{align}
  \cliff{k} := \{U | UPU\conj \in \cliff{k-1},\; \forall P \in \P_n\}~.
\end{align}
This set of gates was first defined by Gottesman and Chuang \cite{GottesmanChuang}
who showed that such gates can be implemented exactly via teleportation.

For $k \geq 3$, the set of gates in the Clifford hierarchy no longer forms a group.
However, diagonal Clifford operators $\cliff{k}_d \subset \cliff{k}$ in the
$k^{th}$ level of the Clifford hierarchy do form a group.

\begin{theorem}[\cite{Zeng08}]
$\cliff{k}_d$ is a group.
\end{theorem}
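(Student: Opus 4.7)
The plan is induction on $k$. Since products and inverses of diagonal unitaries are diagonal, and the identity is a diagonal Pauli, the only content is closure of $\cliff{k}_d$ under multiplication and inversion. The base case $k=2$ is immediate because $\cliff{2}$ is already the Clifford group.

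Assuming inductively that $\cliff{j}_d$ is a group for all $j < k$, I would first make a structural reduction. Since $Z(\b)$ commutes with every diagonal operator, a diagonal $U$ lies in $\cliff{k}$ iff $U X(\a) U^\dagger \in \cliff{k-1}$ for every $\a \in \integer_p^n$. Peeling off an $X(-\a)$ (which preserves $\cliff{k-1}$, because $(WR)P(WR)^\dagger = W(RPR^\dagger)W^\dagger$ whenever $R$ is a Pauli), this is in turn equivalent to
\begin{equation}
D^U_\a := U X(\a) U^\dagger X(-\a) \in \cliff{k-1}_d \qquad \text{for every } \a.
\end{equation}
A direct action on basis states gives $D^U_\a \ket{j} = u_j u_{j-\a}^* \ket{j}$, so $D^U_\a$ really is diagonal, and the inductive hypothesis applies to it.

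Closure and inverses then follow from a short commutator-style calculation. Because $U$ and $D^V_\a$ are both diagonal they commute, so
\begin{equation}
D^{UV}_\a = U V X(\a) V^\dagger U^\dagger X(-\a) = D^V_\a \, D^U_\a,
\end{equation}
and an analogous computation on basis states yields $D^{U^\dagger}_\a = (D^U_\a)^\dagger$. Both right-hand sides belong to $\cliff{k-1}_d$ by the inductive hypothesis, so $UV$ and $U^\dagger$ satisfy the reduction criterion above and hence lie in $\cliff{k}_d$.

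The main obstacle to anticipate is that $\cliff{k-1}$ itself is not a group once $k-1 \geq 3$, so one cannot naively multiply two elements of $\cliff{k-1}$ and expect to stay there. The crux of the argument is therefore the observation that diagonality of $U$ forces $D^U_\a$ to be diagonal, which is precisely what confines the relevant objects to the smaller set $\cliff{k-1}_d$ where the inductive hypothesis supplies the needed group structure. Everything else — verifying that multiplication by Paulis preserves each level, identifying $D^U_\a$ as diagonal, and checking the two commutator identities — is routine bookkeeping.
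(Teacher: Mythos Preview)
Your proposal is correct and follows essentially the same route as the paper: induction on $k$, reduction to conjugation of $X(\a)$ using that diagonal $U$ commutes with $Z(\b)$, the key observation that $D^U_\a = U X(\a) U^\dagger X(-\a)$ (the paper's $V(\a)$) is itself diagonal, and then the commutativity of diagonal unitaries to reduce closure and inverses to the inductive hypothesis on $\cliff{k-1}_d$. Your treatment is slightly more explicit about why multiplying by a Pauli preserves each level of the hierarchy, but the substance is identical.
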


\begin{proof}
The proof works by induction on $k$.  Since $\cliff{2}$ is a group, so is
$\cliff{2}_d$.  To prove the result for larger $k$, the main observation is that if
unitary $U$ is diagonal (regardless if it is in $\cliff{k}_d$ or not), then
\begin{equation}
U X(\v) U^\dagger = V(\v) X(\v),
\end{equation}
with $V (\v)$ also a diagonal unitary.  Since $U \in \cliff{k}_d$ commutes with
$Z(\v)$, we only need to consider conjugation of $X(\v)$.

Now consider $U_1, U_2 \in \cliff{k}_d$. Then we have that $V_1(\v)$ and
$V_2(\v)$ are in $\cliff{k-1}_d$, so
\begin{align}
(U_1 U_2) X(\v) (U_1 U_2)^\dagger &= U_1 V_2(\v) X(\v) U_1^\dagger \\
&= V_2(\v) V_1(\v) X(\v),
\end{align}
since diagonal unitaries commute.  By the inductive hypothesis, $\cliff{k-1}_d$ is
a group, so $V_2(\v) V_1(\v) \in \cliff{k-1}_d$ and $U_1 U_2 \in
\cliff{k}_d$.

In addition, $U^\dagger X(\v) U = V'(\v) X(\v)$ implies that
\begin{equation}
[V'(\v)]^\dagger X(\v) = U X(\v) U^\dagger =  V(\v) X(\v),
\end{equation}
so $V'(\v) = V(\v)^\dagger \in \cliff{k-1}_d$, again by the inductive
hypothesis.  This implies that $U^\dagger \in \cliff{k}_d$.
\end{proof}

\section{Single-qudit diagonal unitary gates and the Clifford
hierarchy}\label{sec:singleQudit}

Let $p$ be some prime number and $m \in \natural$ be a fixed natural number. The ring
$\integer_{p^{m}}$ is defined as
\begin{align}
  \integer_{p^{m}} := \{0, 1, \cdots, p^{m} - 1\}~.
\end{align}
Any element $c \in \integer_{p^m}$ can be expressed as
\begin{align}
  c_0 + c_1 p + \cdots + c_{m-1} p^{m-1}~,
\end{align}
where $\{c_i\}_{i=0}^{m-1}$ are some constants in $\integer_p$.

Let $\Theta: \integer_{p} \embeds \integer_{p^{m}}$ be an arbitrary function.
It can be constructed using polynomials of degree at most $p-1$. This can be seen as
follows. Let $\delta_{k}(j)$ be a delta function such that it is $1$ when $j
= k$ and $0$ otherwise. $\Theta$ can then be expressed as
\begin{align}
  \Theta(j) = \sum_{k} \theta_k \delta_{k}(j)~,
\end{align}
for some constants $\theta_k \in \integer_{p^{m}}$. $\delta_{k}(j)$ is a polynomial of
degree at most $p-1$ since it can be expressed as
\begin{align}
  \delta_{k}(j) = \prod_{\substack{k' \in \integer_p\\ k' \neq k}} \frac{(j - k')}{(k-k')}~.
\end{align}

We shall be interested in studying diagonal unitary operators of the form
\begin{align}
  U = \sum_{j \in \integer_{p}} \exp\left( \frac{2 \pi i}{p^{m}} \Theta(j) \right)
  \kb{j}~.
\end{align}
In this context, we shall refer to $m$ as the precision of the unitary $U$.
Note that all unitary operators $U$ of precision $m$ can be expressed in the manner
above.

We begin by focusing on unitaries constructed using monomial $\Theta$.
\begin{definition}
  For $m \in \natural$, $1 \leq a \leq p-1$, the diagonal unitary gate $U_{m,a}$ is defined as
     \begin{align}
       U_{m,a} := \sum_{j \in \integer_{p}} \exp\left( \frac{2\pi i}{p^{m}} j^{a} \right)
       \kb{j}~.
     \end{align}
\end{definition}
We ignore $a = 0$ because such unitaries are only a constant phase times the
identity operator.

We then define the set of diagonal unitaries $\D_{m,a}$ recursively:
\begin{definition}
\begin{align}
  \D_{m,a} = \gen{ U_{m,b} }_{b=1}^{a} \cdot \{e^{i\phi}\} \cdot \D_{m-1,p-1}~.
\end{align}
\end{definition}
As mentioned earlier, polynomials of degree $p-1$ can be used to construct arbitrary
functions $\Theta: \integer_p \embeds \integer_{p^{m}}$. Hence, $\D_{m,p-1}$ can be
used to construct any diagonal unitary of precision $m$.

Note that $\D_{1,1} = \gen{Z}$ is simply the set of all diagonal Pauli operators with global phase $\phi$.
Hence we may write
\begin{align}
  \D_{1,1} = \cliff{1}_d~.
\end{align}

Among all the diagonal unitary gates, we single out a special class of gates called
phase gates:
\begin{definition}
  For $m \in \natural$, $\phase{m}{k}$ is the phase gate that changes the phase of $\ket{k}$:
     \begin{align}
       \phase{m}{k} = \sum_{\substack{j=0\\ j \neq k}}^{p-1} \kb{j} + \exp\left(
       \frac{2\pi i}{p^{m}} \right) \kb{k}~.
     \end{align}
\end{definition}
Phase gates are not actually distinct diagonal unitary gates. Since the function
$\delta_{k}(j)$ can be represented as a polynomial of degree $p-1$, the phase gate
$\phase{m}{k} \in \D_{m,p-1}$. Nevertheless, it will be helpful to be able to refer to $\phase{m}{k}$ directly.

The main result of this section is the following theorem:
\begin{theorem}\label{thm:diagIsCliff}
  For $m \in \natural$, and $1 \leq a \leq p-1$,
  \begin{align}
    \D_{m,a} = \cliff{(p-1)(m-1) + a}_d ~.
  \end{align}
\end{theorem}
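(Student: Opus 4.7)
I would induct on the Clifford level $k = (p-1)(m-1) + a$ (with $1 \leq a \leq p-1$), the base case $k=1$ being the identification $\D_{1,1} = \cliff{1}_d$ already noted. The central computation underlying both directions is that for any diagonal $U = \sum_j e^{2\pi i \alpha_j}\kb{j}$, one has $U X U^\conj = V X$ where $V = \sum_j e^{2\pi i(\alpha_j - \alpha_{j-1})}\kb{j}$ is diagonal. Since $U$ commutes with $Z$, membership of $U$ in $\cliff{k}_d$ is equivalent to $V \in \cliff{k-1}_d$, which by the inductive hypothesis equals $\D_{m,a-1}$ when $a \geq 2$ and $\D_{m-1,p-1}$ when $a=1$.

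For the forward inclusion $\D_{m,a} \subseteq \cliff{k}_d$, it suffices to place each generator in $\cliff{k}_d$: global phases are central and lie in every level; $\D_{m-1,p-1} = \cliff{(p-1)(m-1)}_d$ by induction, sitting inside $\cliff{k}_d$ via the monotonicity $\cliff{k'}_d \subseteq \cliff{k}_d$ for $k' \leq k$ (an easy corollary of the definitions). For the generator $U_{m,b}$ with $b \leq a$, the binomial expansion $j^b - (j-1)^b = b j^{b-1} + (\text{lower order in } j)$ shows that the associated $V$ is diagonal with phase a polynomial of degree $b-1$ in $j$ over $\integer_{p^m}$, hence $V \in \D_{m,b-1}$, or a global phase when $b=1$. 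By induction $V \in \cliff{(p-1)(m-1)+b-1}_d$, so $U_{m,b} \in \cliff{(p-1)(m-1)+b}_d \subseteq \cliff{k}_d$.

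For the reverse inclusion, given $U \in \cliff{k}_d$ I would write the phase of $V = U X U^\conj X^{-1}$ as $\beta_j = P(j)/p^m + \phi + g(j)/p^{m-1}$, using the inductive description of $\cliff{k-1}_d$: $P$ is a polynomial of degree $\leq a-1$ over $\integer_{p^m}$ (absent when $a = 1$), $\phi \in \real/\integer$ is a global phase, and $g : \integer_p \to \integer_{p^{m-1}}$ encodes the precision $m-1$ contribution. The cyclic identity $\sum_j \beta_j \equiv 0 \pmod 1$ (automatic, since $V$ arose from $U$) forces $p\phi \equiv -\sum_j P(j)/p^m - \sum_j g(j)/p^{m-1} \pmod 1$, and therefore pins $\phi$ modulo $1/p$ to a value of effective precision $m$. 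Setting $\alpha_j - \alpha_0 = \sum_{k=1}^j \beta_k$, Faulhaber's formula expresses $\sum_{k=1}^j P(k)$ as a polynomial of degree $\leq a$ in $j$; the denominators appearing (reciprocals of integers $\leq a$ and Bernoulli numbers $B_i$ for $i \leq a-1$) are all coprime to $p$ under $a \leq p-1$ (using von Staudt--Clausen for the Bernoulli part), hence units of $\integer_{p^m}$. Collecting terms, $\alpha_j$ decomposes into a degree-$a$ polynomial in $j$ over $\integer_{p^m}$ (combining Faulhaber with the precision $m$ part of $j\phi$), a precision $m-1$ function on $\integer_p$ (from $\sum g$ and a precision $1$ linear piece $jc/p$), and a global phase $\alpha_0$ --- exactly the form of $\gen{U_{m,b}}_{b=1}^a \cdot \{e^{i\phi}\} \cdot \D_{m-1,p-1} = \D_{m,a}$.

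The main obstacle is the precision-promotion step: although $V \in \cliff{k-1}_d$ nominally only carries a lower-precision function plus an arbitrary global phase, the cyclic constraint pins that phase on a lattice of precision $m$, producing exactly one new factor of a degree-one precision $m$ generator in $U$. Making this rigorous requires the number-theoretic observation that Faulhaber denominators are invertible modulo $p^m$ precisely under the condition $a \leq p-1$, together with careful checking of the subcase $a = 1$, where the only contribution to the precision $m$ linear term comes from the promoted $\phi$ and must match $U_{m,1}$ exactly. The forward direction's level-versus-degree accounting then ensures these contributions land at the correct level and no higher, closing the induction.
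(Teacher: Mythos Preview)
Your plan matches the paper's proof almost exactly: the paper also splits the theorem into the two inclusions (Lemmas~\ref{lem:diagInCliff} and~\ref{lem:cliffInDiag}), inducts on the level $k=(p-1)(m-1)+a$, handles the forward direction by expanding $U_{m,b}XU_{m,b}^\dagger$ via the binomial theorem, and handles the reverse direction by using the telescoping constraint $\sum_j\beta_j\equiv 0\pmod 1$ to pin $\phi$ to precision $m$ and then Faulhaber's formula together with von~Staudt--Clausen to keep the denominators coprime to $p$.

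There is one concrete oversight in your forward direction. The polynomial identity $j^b-(j-1)^b=bj^{b-1}+\cdots$ does not directly give the phase of $V$ at $j=0$: the index $j-1$ is taken modulo $p$, so $\alpha_{-1}=\alpha_{p-1}=(p-1)^b/p^m$, whereas the polynomial $(j-1)^b$ evaluated at $j=0$ over $\integer_{p^m}$ gives $(-1)^b$. The discrepancy
\[
(p-1)^b-(-1)^b=\sum_{d\geq 1}\binom{b}{d}(-1)^{b-d}p^{\,d}
\]
is divisible by $p$ but not by $p^m$, so $V$ is the degree-$(b-1)$ polynomial unitary you describe \emph{times} a phase gate $\phase{m-1}{0}$ (and lower-precision corrections). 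In particular, for $b=1$ the gate $V$ is not a pure global phase but $e^{2\pi i/p^m}\,\phase{m-1}{0}^{-1}$. Since $\phase{m-1}{0}\in\D_{m-1,p-1}\subseteq\D_{m,b-1}$, your conclusion $V\in\D_{m,b-1}$ (resp.\ $V\in\D_{m-1,p-1}$ when $b=1$) survives, but the justification needs this extra step; the paper isolates it explicitly by splitting off the $\ket{0}\bra{p-1}$ term in the conjugation.
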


To prove this, we shall break the result into two lemmas, each showing containment of one
group in the other.
\begin{lemma}\label{lem:diagInCliff}
For $m \in \natural$, and $1 \leq a \leq p-1$,
  \begin{align}
    \D_{m,a} \subseteq \cliff{(p-1)(m-1) + a}_d ~.
  \end{align}
\end{lemma}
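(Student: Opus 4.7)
My approach is induction on the level $N := (p-1)(m-1)+a$. The base case $(m,a)=(1,1)$ is immediate, since $\D_{1,1}=\gen{Z}\cdot\{e^{i\phi}\}=\cliff{1}_d$. For the inductive step, the preceding theorem tells us $\cliff{N}_d$ is a group, so it suffices to check that each generator of $\D_{m,a}=\gen{U_{m,b}}_{b=1}^{a}\cdot\{e^{i\phi}\}\cdot\D_{m-1,p-1}$ lies in $\cliff{N}_d$. Global phases sit in $\cliff{1}_d\subseteq\cliff{N}_d$, and the inductive hypothesis gives $\D_{m-1,p-1}\subseteq\cliff{(p-1)(m-1)}_d\subseteq\cliff{N}_d$. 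The real work is therefore to show $U_{m,b}\in\cliff{(p-1)(m-1)+b}_d$ for each $1\leq b\leq a$; since $b\leq a$, this is enough.

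I would verify the defining property of the hierarchy directly. Because $U_{m,b}$ is diagonal it commutes with every $Z^w$, so only the conjugates $U_{m,b}X^u U_{m,b}^\dagger$ need attention. A short calculation gives
\begin{equation}
    U_{m,b}\,X^u\,U_{m,b}^\dagger \;=\; V_u\cdot X^u,
\end{equation}
where $V_u$ is diagonal with $V_u\ket{k}=\exp\!\bigl(2\pi i\,[k^b-(k-u)^b]/p^m\bigr)\ket{k}$. Expanding $k^b-(k-u)^b$ by the binomial theorem gives a polynomial in $k$ of degree $b-1$ with ordinary integer coefficients, so $V_u$ factors as a product of integer powers of $U_{m,0},U_{m,1},\dots,U_{m,b-1}$, where $U_{m,0}$ is interpreted as the scalar $e^{2\pi i/p^m}$. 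Thus $V_u\in\D_{m,b-1}$ when $b\geq 2$, and $V_u$ is a pure global phase when $b=1$.

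By the inductive hypothesis $\D_{m,b-1}\subseteq\cliff{(p-1)(m-1)+b-1}_d$, so $V_u$ lies in this level, and left-multiplying by the Pauli $X^u$ keeps us inside the same level of the hierarchy (a short induction on $k$ shows that multiplication and conjugation by Paulis preserve every $\cliff{k}$). Hence $U_{m,b}X^u U_{m,b}^\dagger\in\cliff{(p-1)(m-1)+b-1}$, which is exactly what the definition of $\cliff{(p-1)(m-1)+b}_d$ demands, completing the induction.

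The main obstacle I expect is the bookkeeping inside $V_u$: one has to check that every coefficient $\binom{b}{\ell}(-u)^{b-\ell}$ can be absorbed as an integer exponent on the generator $U_{m,\ell}$, and that the highest surviving degree really is $b-1$. Because we are restricted to $b\leq p-1$, no degree wraps around past $p-1$ and no increase in precision $m$ is forced, so $V_u$ stays cleanly inside $\D_{m,b-1}$ and the recursion closes at exactly the hierarchy level claimed.
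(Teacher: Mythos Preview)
Your inductive scheme is essentially the paper's, but the computation of $V_u$ has a real gap. The binomial identity $k^b-(k-u)^b=\sum_{\ell<b}\binom{b}{\ell}(-1)^{b-\ell+1}u^{b-\ell}k^{\ell}$ holds only when $k-u$ is the ordinary integer difference. In $U_{m,b}$ the phase at $\ket{j}$ is $\exp(2\pi i\,j^{b}/p^{m})$ with $j$ the representative in $\{0,\dots,p-1\}$, so the exponent appearing in $V_u\ket{k}$ is actually $k^{b}-\bigl((k-u)\bmod p\bigr)^{b}$, and for $k<u$ this does \emph{not} agree with the binomial polynomial modulo $p^{m}$ once $m\geq 2$. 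Already for $b=1$, $u=1$ one gets $V_1\ket{k}=\exp(2\pi i/p^{m})\ket{k}$ for $k\geq 1$ but $V_1\ket{0}=\exp\bigl(2\pi i(1-p)/p^{m}\bigr)\ket{0}$, so $V_1$ is not the pure global phase you claim.

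The mismatch at each wrap-around value of $k$ is a phase of precision at most $m-1$, i.e.\ a product of powers of the phase gates $\phase{m-d}{k}$ with $d\geq 1$; these are degree-$(p-1)$ objects and lie in $\D_{m-1,p-1}$. The paper isolates exactly this correction (for $u=1$, separating the $j=0$ term) and uses $\D_{m-1,p-1}\subseteq\D_{m,a-1}$ to absorb it, with a separate induction-on-$m$ step for the boundary case $a=1$. With that correction your conclusion $V_u\in\D_{m,b-1}$ becomes valid and the induction closes, but as written the assertion that ``$V_u$ factors as a product of integer powers of $U_{m,0},\dots,U_{m,b-1}$'' is false, and the final paragraph's claim that ``$V_u$ stays cleanly inside $\D_{m,b-1}$'' without any lower-precision factors is precisely where the argument breaks.
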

\begin{proof}
The proof proceeds via induction on both $m$ and $a$.

\textbf{Base case: }
By definition, $\D_{1,1}$ is the group of all diagonal Pauli operators and therefore
\begin{align}
  \D_{1,1} = \cliff{1}_d ~.
\end{align}
This implies the weaker result
\begin{align}
  \D_{1,1} \subseteq \cliff{1}_d ~.
\end{align}

\textbf{Induction on $a$:}

Suppose we have proved
\begin{enumerate}
  \item $\forall m' < m$, $\forall \, b \in \integer_{p}$, that
    \begin{align}
      \D_{m',b} \subseteq \cliff{(p-1)(m'-1) + b}_d ~.
    \end{align}
  \item and $\forall \, a'$ such that $1 \leq a' < a \leq p-1$, that
         \begin{align}
	   \D_{m,a'} \subseteq \cliff{(p-1)(m-1) + a'}_d~.
	 \end{align}
\end{enumerate}

Consider the conjugation
\begin{align}
  U_{m,a}X U_{m,a}\conj = \sum_{j=1}^{p-1}
  &\exp\left( \frac{2\pi i}{p^{m}}\left[j^{a} - (j-1)^{a}\right]\right)
  \ket{j}\bra{j-1} \nonumber \\
  + &\exp\left(-\frac{2\pi i}{p^{m}}(p-1)^{a} \right) \ket{0}\bra{p-1}\\
  = \sum_{j=1}^{p-1}
  &\exp\left( \frac{2\pi i}{p^{m}}\left(-\sum_{d=0}^{a-1}
  c_{d}j^{d}\right)\right)\ket{j}\bra{j-1} \nonumber\\
  + &\exp\left(-\frac{2\pi i}{p^{m}}\left( \sum_{d=0}^{a} c_{d}p^{d}
  \right)\right)\ket{0}\bra{p-1} \label{eq:sumOverJ}~,
\end{align}
where
\begin{align}
  c_d = \left(
   \begin{matrix}
     a\\
     d\\
   \end{matrix}
   \right)(-1)^{a-d} ~.
\end{align}
We have separated the sum over $j$ into two parts because this allows us to write it as
a product of gates that can be easily identified. First, note that the entire
expression contains a constant phase
\begin{align*}
  \exp\left( \frac{2\pi i}{p^{m}} (-1)^{a+1} \right)
\end{align*}
that arises from the $d=0$ terms and can be removed.

In equation (\ref{eq:sumOverJ}), the sum over $j$ arises from a diagonal unitary $W_{m,a-1}$ times
$X$, where $W_{m,a-1} \in \D_{m,a-1}$: this unitary has the form
\begin{align}
  W_{m,a-1} = \sum_{j \in \integer_p} \exp\left( \frac{2\pi i}{p^{m}} a j^{a-1}
  \right) \kb{j}.
\end{align}
We have ignored terms of the form
\begin{align}
  \frac{1}{p^{n}}j^{d}
\end{align}
if $n < m$ or if $n = m$ but $d < a-1$ since they are in $\D_{m,d} \subseteq \D_{m,a-1}$ and will therefore not affect the level of the hierarchy.

The next term of the expression (\ref{eq:sumOverJ}) is a product of phase gates $\phase{m-d}{0}$ times
$X$, where $d$ is at least $1$. To pin down which level of the Clifford hierarchy
$U_{m,a}$ lies in, we only need to consider the finest phase rotations i.e. the terms
with the largest precision; the rest of the gates are lower in the hierarchy and can
safely be ignored.

With this observation, we can write the above expression as
\begin{align}
 \exp\left( \frac{2\pi i}{p^m} (-1)^{a+1} \right) \phase{m-1}{0}W_{m,a-1} X~.
\end{align}
This can be further simplified. The phase gate $\phase{m-1}{0} \in \D_{m-1,p-1}
\subseteq \D_{m,a-1}$ and therefore the product $\phase{m-1}{0}W_{m,a-1} := V_{m,a-1}
\in \D_{m,a-1}$. Hence, the above expression is
\begin{align}\label{eq:finalStep}
  \exp\left( \frac{2\pi i}{p^m} (-1)^{a+1} \right)V_{m,a-1}X~.
\end{align}

Using the inductive hypothesis, we know that
\begin{align}
  &V_{m,a-1} \in \cliff{(p-1)(m-1) + (a-1)}_d \nonumber \\
  \implies &U_{m,a} \in \cliff{(p-1)(m-1) + a}_d~.
\end{align}
Therefore,
\begin{align}
  \D_{m,a} \subseteq \cliff{(p-1)(m-1) + a}_d~.
\end{align}

\textbf{Induction on $m$:}

Suppose we have shown that $\forall m' < m$ and $a \in \integer_{p}$,
\begin{align}
  \D_{m',a} \subseteq \cliff{(p-1)(m'-1) + a}_d ~.
\end{align}
Consider the conjugation
\begin{align}
  U_{m,1}XU_{m,1}\conj
  = \sum_{j=1}^{p-1}&\exp\left( \frac{2\pi i}{p^{m}}
  \right)\ket{j}\bra{j-1}\nonumber\\
  + &\exp\left(-\frac{2\pi i}{p^{m}} (p-1) \right)
  \ket{0}\bra{p-1}\\
  = &\exp\left( \frac{2\pi i}{p^m} \right) \phase{m-1}{0}^{-1}X ~.
\end{align}

Since the phase gate $\phase{m-1}{0} \in \D_{m-1,p-1}$, the inductive
hypothesis stipulates
\begin{align}
  \phase{m-1}{0} \in \cliff{(p-1)(m-1)}_d \implies U_{m,1} \in \cliff{(p-1)(m-1) + 1}_d ~.
\end{align}
Therefore,
\begin{align}
  \D_{m,1} \subseteq \cliff{(p-1)(m-1) + 1}_d~.
\end{align}
\end{proof}

\begin{lemma}\label{lem:cliffInDiag}
  $\D_{m,a} \supseteq \cliff{(p-1)(m-1) + a}_d $.
\end{lemma}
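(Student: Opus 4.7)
The plan is to proceed by induction on $k = (p-1)(m-1) + a$. The base case $k=1$ forces $(m,a) = (1,1)$, where $\cliff{1}_d = \D_{1,1}$ by the definition of $\D_{1,1}$ as the group of diagonal Pauli operators (together with an arbitrary global phase). For the inductive step I would take a diagonal $U \in \cliff{k}_d$ and write $UXU^\dagger = V \cdot X$ with $V$ diagonal. The identity $(VX) P (VX)^\dagger = V(XPX^\dagger)V^\dagger$, combined with the fact that $P \mapsto XPX^\dagger$ is a bijection of the Pauli group, shows that $VX \in \cliff{k-1}$ is equivalent to $V \in \cliff{k-1}_d$. Applying the inductive hypothesis at level $k-1$ then places $V$ in $\D_{m,a-1}$ when $a>1$, and in $\D_{m-1,p-1}$ when $a=1$.

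The heart of the argument is to build an explicit $U' \in \D_{m,a}$ whose conjugation of $X$ matches the top-order part of $V$, so that $V \cdot \partial(U')^{-1}$ lies in a strictly smaller class (where $\partial(W)$ denotes the diagonal factor in $WXW^\dagger = \partial(W)X$). For $a>1$, the explicit calculation in the proof of \Cref{lem:diagInCliff} shows that the polynomial exponent produced by $U_{m,a}XU_{m,a}^\dagger \cdot X^{-1}$ is dominated, at precision $m$, by $a\, j^{a-1}$; since $1 \leq a \leq p-1$ the integer $a$ is a unit in $\integer_{p^m}$, so writing $V = U_{m,a-1}^{c} \cdot V_{\mathrm{rest}}$ with $V_{\mathrm{rest}} \in \D_{m,a-2}$ one takes $U' = U_{m,a}^{a^{-1}c}$. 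For $a=1$ the analogous calculation gives $U_{m,1}XU_{m,1}^\dagger = \exp(2\pi i/p^m)\cdot \phase{m-1}{0}^{-1}\cdot X$, and the leading coefficient of the degree-$(p-1)$ Lagrange-interpolation polynomial representing $\delta_0(j)$ is $1/(p-1)! \equiv -1 \pmod{p}$ by Wilson's theorem, hence a unit in $\integer_{p^{m-1}}$; an appropriate power of $U_{m,1}$ therefore matches the $U_{m-1,p-1}^{c}$ piece of $V$.

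Once such a $U'$ is fixed, the remainder $V \cdot \partial(U')^{-1}$ has had its top term removed and lies in $\D_{m,a-2}$ (respectively $\D_{m-1,p-2}$), which by \Cref{lem:diagInCliff} sits inside $\cliff{k-2}_d$. The equivalence used above, applied in reverse, then puts $U(U')^{-1}$ in $\cliff{k-1}_d$, and the inductive hypothesis gives $U(U')^{-1} \in \D_{m,a-1}$ (or $\D_{m-1,p-1}$ when $a=1$), both of which are subgroups of $\D_{m,a}$. Since $\D_{m,a}$ is a group, $U = U' \cdot (U(U')^{-1}) \in \D_{m,a}$, closing the induction. The main technical obstacle is the invertibility of the leading coefficient when extracting the top term of $\partial(U_{m,a})$: for $a>1$ this is immediate from $\gcd(a,p)=1$, while for $a=1$ it rests on Wilson's theorem to keep $1/(p-1)!$ a unit modulo $p$, and this is precisely the reason the level of the Clifford hierarchy jumps by $p-1$ when the precision increases by one.
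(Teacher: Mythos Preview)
Your approach is correct and genuinely different from the paper's. The paper proves the lemma by \emph{explicitly solving} the finite-difference equation $\Delta\theta(j)=\sum \mu_{(m',a')}j^{a'}/p^{m'}+\phi$: it first sums over $j=0,\dots,p-1$ and uses $\sum_{k=1}^{p-1}k^{a'}\equiv 0$ or $-1\pmod p$ to pin down $\phi=u/p^m$, and then invokes Faulhaber's formula together with the arithmetic of Bernoulli-number denominators to show that each partial sum $\sum_{k\le j}k^{a'}$ is a polynomial of degree $a'+1$ over $\integer_{p^{m'}}$ (the point being that for $a'\le p-2$ neither $a'+1$ nor any relevant Bernoulli denominator is divisible by $p$). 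Your argument avoids all of this by an operator-level descent: peel off the leading generator of $V=\partial(U)$ with a suitable power of $U_{m,a}$, drop one Clifford level, and invoke the inductive hypothesis on $U(U')^{-1}$. The invertibility you isolate --- $a\in\integer_{p^m}^\times$ for $a>1$ and $(p-1)!\in\integer_{p^{m-1}}^\times$ via Wilson for $a=1$ --- is the elementary substitute for the Bernoulli/Faulhaber analysis. Your route is shorter and more structural; the paper's gives an explicit closed form for $\theta(j)$.

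Two small points deserve a line each if you write this out. First, the step ``$\partial(W)\in\cliff{k-2}_d\Rightarrow W\in\cliff{k-1}_d$'' for diagonal $W$ uses that conjugation by a Pauli preserves every level $\cliff{j}$ (needed to handle $WX^aZ^bW^\dagger$, not just $WXW^\dagger$); this is routine by induction on $j$ but should be stated. Second, at $k=2$ your chain lands in ``$\cliff{0}_d$'', which is undefined; there you need the direct observation that if $\partial(W)=e^{i\psi}$ is scalar, summing the $p$ phase differences around the cycle forces $e^{ip\psi}=1$, whence $W=e^{i\theta_0}Z^k\in\cliff{1}_d$. (Relatedly, $\D_{m,0}$ is not defined in the paper; read it as $\{e^{i\phi}\}\cdot\D_{m-1,p-1}$.)
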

\begin{proof}

For $m, m' \geq 1, \,  1 \leq a,a' \leq p-1$, define $(m',a') < (m,a)$ if $m' < m$ or $m' = m$ and $a' < a$. Clearly this defines a total ordering on the set of pairs $\{(m,a)\}$, and $(m',a') < (m,a)$ if and only if $(m'-1)(p-1) + a' < (m-1)(p-1)+a$.

We shall prove this lemma by induction on $(m,a)$ relative to this ordering.

  \textbf{Base case $(m,a) = (1,1)$:} By definition $\D_{1,1} = \cliff{1}_d$ and therefore,
  \begin{align}
    \D_{1,1} \supseteq \cliff{1}_d ~.
  \end{align}

  \textbf{Induction on $(m,a)$:} Suppose we have shown
    $\forall \, (m',a') < (m,a)$ that
  \begin{align}
    \D_{m',a'} \supseteq \cliff{(p-1)(m'-1) + a'}_d~.
  \end{align}

  Suppose $U \in \cliff{(p-1)(m-1) + a}_d $.
  Let us express $U$ as
  \begin{align}
    U = \sum_{j \in \integer_{p}} \exp\left( 2\pi i \cdot \theta(j) \right) \kb{j}~.
  \end{align}
  Without loss of generality we can let $\theta(0) = 0$, absorbing the difference into a global phase. We would like to show that
  \begin{align}
    U \in \D_{m,a}~.
  \end{align}

  For some $\phi \in [0, 1)$, we are guaranteed the existence of a unitary $V \in
  \cliff{(p-1)(m-1) + (a-1)}_d$ such that
  \begin{align}\label{eq:constraint}
    U X U\conj = e^{2 \pi i\phi} V X~.
  \end{align}

  From the inductive hypothesis, $V$ is an element of $\D_{m,a-1}$ if $a \geq 2$, and an element of $\D_{m-1,p-1}$ if $a = 1$. Let $\Delta\theta$
  denote the function
  \begin{align}
    \Delta\theta(j) = \begin{cases}
      & \theta(j) - \theta(j-1) \mbox{ if } j \in \{1,\cdots,p-1\}\\
      & \theta(0) - \theta(p-1) \mbox{ if } j = 0~.
  \end{cases}
  \end{align}

  Together with Equation (\ref{eq:constraint}), this implies that for $j=0,1,\cdots, p-1,$ we must have, for some $\mu_{(m',a')} \in \integer_{p}$,
  \begin{align}\label{eq:recurse}
    \Delta\theta(j) &= \sum\limits_{(m',a')<(m,a)}\mu_{(m',a')}\frac{j^{a'}}{p^{m'}} + \phi \mod  1.
  \end{align}

  Let
\begin{align}
\sum(j,a') = \sum\limits_{k=1}^{j} k^{a'}.
\end{align}

Then adding up the $p$ equations in (\ref{eq:recurse}) we obtain
  \begin{align}\label{eq:sum_phi}
  \sum\limits_{(m',a')<(m,a)} \frac{\mu_{(m',a')}}{p^{m'}} \sum(p-1,a') + p \phi = 0 \mod 1.
  \end{align}

Since $\integer_p^{\times}$ is a cyclic group, it is direct to show that
\begin{equation}\label{eq:sum_p-1}
\sum(p-1,a') =
\begin{cases}
p-1 \mod p, & a' = p-1 \\
0 \mod p, & a' \neq p-1
\end{cases}
\end{equation}

Substituting (\ref{eq:sum_p-1}) into (\ref{eq:sum_phi}), we know  that there exist $\nu_{m',a'} \in \integer, \, w \in \integer$, such that

\begin{align}
\phi = \sum\limits_{\substack{(m',a')<(m,a) \\ a' \neq p-1}} \frac{\nu_{(m',a')}}{p^{m'}} + \sum\limits_{\substack{(m',a')<(m,a) \\ a' = p-1}}\frac{\nu_{(m',a')}}{p^{m'+1}}  + \frac{w}{p}.
\end{align}

Since $(m',p-1) < (m,a)$ implies $m'+ 1 \leq  m$, there exists $u \in \integer$ such that
\begin{align}\label{eq:phi}
\phi = \frac{u}{p^m}.
\end{align}

Next, $\theta(j)$ can be derived from the inductive formula in equation (\ref{eq:recurse}),
\begin{align}\label{eq:theta}
\theta(j) &= \sum\limits_{(m',a')<(m,a)}\frac{\mu_{(m',a')}}{p^{m'}} \sum(j,a') + j \phi \mod  1 \nonumber \\
          &= \sum\limits_{(m',a')<(m,a)}\frac{\mu_{(m',a')}}{p^{m'}} \sum(j,a') + \frac{uj}{p^m} \mod  1.
\end{align}

Faulhaber's formula~\cite{Faulhaber} on sums of powers of positive integers states that
\begin{align}
\sum(j,a') = \frac{1}{a'+1} \sum_{k=0}^{a'} (-1)^k \binom{a'+1}{k}B_k j^{a'+1-k},
\end{align}
where $B_k$'s are the Bernoulli numbers and $B_1 = -\frac{1}{2}$. We use the following two facts on Bernoulli numbers:
\begin{enumerate}
\item $B_{2n+1} = 0,\, n \geq 1$.
\item The denominator of $B_{2n}$ is the product of all prime numbers $q$ such that $q-1$ divides $2n$.
\end{enumerate}

In the following we discuss some properties of $\sum(j,a')$ in two cases.

\textbf{Case 1: $a' \neq p-1$}.

Since $a' \leq p-2$, $p$ can not be a divisor of the denominator of any $B_{2n}$ for $2n \leq a'$. Let $L$ be the least common multiplier of the denominators of $\{B_{2n}, 2n \leq a'\} \cup \{B_1\}$.  Then $L$ is coprime to $p$, and we have
\begin{align}
L \cdot (a'+1) \sum(j,a') = \sum\limits_{k=0}^{a'} a_k j^{a'+1-k},\, a_k \in \integer~.
\end{align}

Let $I_{m'} \in \integer$ be the inverse of $L \cdot (a'+1)$ modulo $p^{m'}$, then
\begin{align}\label{eq:sumj1}
\sum(j,a') = \sum\limits_{k=0}^{a'} I_{m'}a_k j^{a'+1-k} \mod p^{m'}.
\end{align}

\textbf{Case 2: $a' = p-1$}.

In this case, $\sum(j,a')$, just like any function from $\integer_p$ to $\integer_{p^{m'}}$, can be written as a polynomial $\Theta_{a'}(j)$ of degree at most $p-1$ over $\integer_{p^{m'}}$.

Finally, combining equations (\ref{eq:theta}) and (\ref{eq:sumj1}), we have

\begin{align}\label{eq:phi_final}
\theta(j) &= \sum\limits_{\substack{(m',a')<(m,a) \\ a' \neq p-1}}\frac{\mu_{(m',a')}}{p^{m'}} \sum\limits_{k=0}^{a'} I_{m'}a_k j^{a'+1-k} \nonumber\\ 
&+ \sum\limits_{\substack{(m',a')<(m,a) \\ a' = p-1}}\mu_{(m',a')} \frac{\Theta_{a'}(j)}{p^{m'}} \nonumber \\ &+ \frac{uj}{p^m} \mod  1~.
\end{align}

Again using the fact that $(m',p-1) < (m,a)$ implies $m'+ 1 \leq  m$, we know that
the terms in the second line of the above equation sit in $\D_{m',p-1} \subset
\D_{m,a}$. It is easy to see the other terms in the equation are also in $\D_{m,a}$. Thus $U \in \D_{m,a}$.

\end{proof}

Since $\cliff{k}_d$ is an Abelian group, it can be written as a product of cyclic groups.  Now that we know its structure, it is straightforward to determine this decomposition explicitly.

\begin{corollary}
For $a\leq p-1$,
\begin{equation}
\cliff{a}_d = \D_{1,a} \cong U(1) \times \integer_{p}^a~.
\end{equation}
For $m > 1$,
\begin{equation}
\cliff{(p-1)(m-1) + a}_d = \D_{m,a} \cong U(1) \times \integer_{p^m}^a \times \integer_{p^{m-1}}^{p-a-1}~.
\end{equation}
\end{corollary}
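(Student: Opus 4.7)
The plan is to combine the recursive definition
\begin{align*}
\D_{m,a} = \gen{U_{m,b}}_{b=1}^{a}\cdot\{e^{i\phi}\}\cdot \D_{m-1,p-1}
\end{align*}
with Theorem~\ref{thm:diagIsCliff} and exploit the fact that diagonal unitaries commute. The subgroup $U(1)$ of global phases is divisible and already sits inside $\D_{m,a}$, so it splits off as a direct factor; it therefore suffices to identify the structure of $\D_{m,a}/U(1)$ and then restore the $U(1)$ at the end.

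The key computation is
\begin{align*}
U_{m,b}^{p} = \sum_{j}\exp\!\left(\frac{2\pi i\, p\, j^b}{p^m}\right)\kb{j} = U_{m-1,b}
\end{align*}
for $1\leq b\leq p-1$, so $U_{m,b}$ absorbs every lower-precision generator with the same exponent $b$. Unrolling the recursion, a minimal generating set for $\D_{m,a}/U(1)$ therefore consists of $U_{m,b}$ for $1\leq b\leq a$ together with $U_{m-1,b}$ for $a+1\leq b\leq p-1$, with conjectured cyclic orders $p^m$ and $p^{m-1}$ respectively. The base case $m=1$ leaves only $U_{1,b}$ for $1\leq b\leq a$ (each of order $p$), which gives $\D_{1,a}\cong U(1)\times\integer_p^a$ directly and handles the first line of the corollary.

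The main obstacle is to show these generators are genuinely independent with the claimed orders, i.e.\ that
\begin{align*}
\prod_{b=1}^{a} U_{m,b}^{c_b}\cdot \prod_{b=a+1}^{p-1} U_{m-1,b}^{d_b} = e^{2\pi i\phi}\,\1
\end{align*}
forces $\phi\equiv 0$, $c_b\equiv 0 \bmod p^m$, and $d_b\equiv 0\bmod p^{m-1}$. Evaluating the phase at each $j\in\integer_p$ converts this to the requirement $\sum_{b=1}^{a} c_b j^b + p\sum_{b=a+1}^{p-1} d_b j^b \equiv 0 \bmod p^m$ for all $j$. The $j=0$ equation kills $\phi$, and for $j\in\{1,\dots,p-1\}$ I would invert the $(p-1)\times(p-1)$ Vandermonde-type matrix $V$ with entries $V_{jb}=j^b$. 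Factoring a $j$ out of each row gives $\det V = (p-1)!\,\prod_{1\leq i<j\leq p-1}(j-i)$, which is coprime to $p$ by Wilson's theorem together with the bound $1\leq j-i\leq p-2$, hence invertible modulo $p^m$. This forces $c_b\equiv 0\bmod p^m$ and $p d_b \equiv 0\bmod p^m$, equivalently $d_b\equiv 0\bmod p^{m-1}$. Restoring the $U(1)$ factor yields $\D_{m,a}\cong U(1)\times\integer_{p^m}^a\times\integer_{p^{m-1}}^{p-a-1}$, as claimed.
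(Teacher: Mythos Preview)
Your argument is correct and follows essentially the same outline as the paper: both proofs use the absorption $U_{m,b}^{p}=U_{m-1,b}$ to reduce to the generating set $\{U_{m,b}\}_{b\le a}\cup\{U_{m-1,b}\}_{b>a}$ together with the global phase, and then argue independence of the different degrees $b$. The paper merely asserts that $\gen{U_{m,a}}$ does not contain $\gen{U_{m',a'}}$ for $a'\neq a$ and reads off the cyclic factors, whereas you supply a genuine proof of independence via the invertibility of the Vandermonde matrix $(j^{b})_{1\le j,b\le p-1}$ modulo $p^{m}$; this is a welcome strengthening, since pairwise non-containment alone is not quite enough to conclude the direct-product decomposition.
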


\begin{proof}
$\gen{U_{m,a}}$ is isomorphic to $\integer_{p^m}$.  It contains $\gen{U_{m',a}}$ for $m' < m$ but not $\gen{U_{m',a'}}$ for any $a' \neq a$.  Therefore, each degree of polynomial with prefactor $1/p^m$ corresponds to one factor of $\integer_{p^m}$, and each degree with prefactor $1/p^{m-1}$ corresponds to one factor of $\integer_{p^{m-1}}$.  Lower values of $m' < m-1$ do not give additional factors because all degrees of polynomials up to $p-1$ are already present for $m$ or $m-1$.  There is also a global phase, isomorphic to $U(1)$.
\end{proof}

\section{$n$ qudit diagonal gates and the Clifford hierarchy}\label{sec:nQudits}

In this section, we shall generalize the above results to $n$ qudits. $\a,\b,\cdots$
shall denote vectors in $\integer_{p}^{n}$. The weight of a vector $\a \in
\integer_p^n$, is defined as $\wt{\a} := \sum_{i=1}^{n} a_i$. A basis element of $\H
= \left( \complex^{p} \right)^{\otimes n}$ is represented as $\ket{\j}
= \bigotimes_{i=1}^{n} \ket{j_i}$. For $i \in \{1,\cdots,n\}$, let $\e_i \in
\integer_{p}^{n}$ be the vector whose $i$-th component is $1$ and the rest are $0$.

Let $\Theta: \integer_{p}^{n} \to  \integer_{p^{m}}$ be some function. Following
a similar line of reasoning as in the previous section, we can show that any such
function can be constructed using polynomials of degree at most $n(p-1)$.

Any diagonal unitary of precision $m$ on $n$ qudits can be expressed as
\begin{align}
  U = \sum_{\j}\exp\left( \frac{2\pi i}{p^{m}} \Theta(\j) \right) \kb{\j}~.
\end{align}

As before, we shall start with unitaries whose exponents only contain monomial terms.
\begin{definition}
  For $m \in \natural$ and $\a \in \integer_{p}^{n}$, such that $0 \leq a_i \leq p-1$,
      \begin{align}
      U_{m,\a} := \sum_{\j} \exp\left( \frac{2\pi i}{p^m}
      j_1^{a_1}\cdots j_n^{a_n} \right) \kb{\j}~.
      \end{align}
\end{definition}
Similar to the single qudit case, $j_1^{a_1}\dots j_n^{a_n}: \integer_p^n
\to \integer_{p^m}$, i.e. the range of these monomials is $\integer_{p^m}$.

The set $\D_{m,\a}$ shall denote the set of diagonal unitary operators whose
exponents are multivariate polynomials of degree $\a$.
\begin{definition}
  For $m \in \natural$, $\a \in \integer_{p}^{n}$, and vectors $\b \in
  \integer_{p}^{n}$ such that $b_i \leq a_i$ for all $i \in [n]$ and $\wt{\b}
  < \wt{\a}$, and for any vectors $\c \in \integer_{p}^{n}$ such that $1 \leq c_i \leq
  p-1$, and $\wt{\c} \leq \wt{\a} + (p-1)$
  \begin{align}
    \D_{m,\a} := \gen{U_{m,\a}} \cdot \{e^{i\phi}\} \cdot \prod_{\b} \D_{m,\b} \cdot \prod_{\c} \D_{m-1,\c}~.
  \end{align}
\end{definition}

Note that $D_{1,\e_i} = \gen{Z(\e_i)} \cdot \{e^{i\phi}\}$ is the set of diagonal Paulis on the $i$th qudit with a global phase.

\begin{definition}
  For $w \in \natural$, let $\S_w$ denote the set
  \begin{align}
    \S_w = \{(m,\a) | (p-1)(m-1) + \wt{\a} = w \}~.
  \end{align}
  We then define
  \begin{align}
    \D_{w} := \prod_{(m,\a) \in \S_w} \D_{m,\a}~.
  \end{align}
\end{definition}

The main result of this section is the following theorem
\begin{theorem}\label{thm:diagIsCliffnQudit}
  For $w \in \natural$,
  \begin{align}
    \D_{w} = \cliff{w}_d ~.
  \end{align}
\end{theorem}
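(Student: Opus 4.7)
The plan is to mirror the strategy of Theorem~\ref{thm:diagIsCliff}, proving the two inclusions $\D_w \subseteq \cliff{w}_d$ and $\D_w \supseteq \cliff{w}_d$ separately by induction on $w$. The base case $w=1$ is immediate: $\D_1$ is generated by the single-qudit diagonal Paulis $Z(\e_i)$ together with a global phase, which is exactly $\cliff{1}_d$. Since diagonal gates commute with $Z(\v)$, and since $X(\v) = \prod_i X(\e_i)^{v_i}$, in both directions it is enough to reason about conjugation by each $X(\e_i)$ individually.

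For the forward direction, by closure of $\cliff{w}_d$ under products it suffices to show each generator $U_{m,\a}$ with $(m,\a) \in \S_w$ lies in $\cliff{w}_d$. A direct calculation gives $U_{m,\a} X(\e_i) U_{m,\a}^\dagger = \sum_\j \exp\bigl(\tfrac{2\pi i}{p^m}[(\j+\e_i)^\a - \j^\a]\bigr)\, \ket{\j + \e_i}\bra{\j}$, and the binomial expansion of $(j_i+1)^{a_i} - j_i^{a_i}$ produces a multivariate polynomial in $\j$ of total weight $\wt{\a}-1$ at precision $p^m$. The $j_i = p-1$ wrap-around, exactly as in Lemma~\ref{lem:diagInCliff}, contributes boundary phase gates of precision at most $p^{m-1}$ (because $p^{a_i}/p^m = 1/p^{m-a_i}$). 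Both contributions sit in $\D_{w-1}$, so the inductive hypothesis identifies the diagonal factor as an element of $\cliff{w-1}_d$, giving $U_{m,\a} \in \cliff{w}_d$.

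For the reverse direction, write $U = \sum_\j \exp(2\pi i\, \theta(\j))\, \kb{\j} \in \cliff{w}_d$ with $\theta(\0) = 0$. For each $i$ we obtain a diagonal $V_i \in \cliff{w-1}_d$ and a phase $e^{2\pi i \phi_i}$ with $U X(\e_i) U^\dagger = e^{2\pi i \phi_i} V_i X(\e_i)$. The inductive hypothesis places $V_i \in \D_{w-1}$, so its exponent is an explicit multivariate polynomial $g_i(\j)$ whose monomials $\j^{\a'}/p^{m'}$ all satisfy $(p-1)(m'-1) + \wt{\a'} < w$. This yields the first-difference system
\begin{align*}
  \theta(\j + \e_i) - \theta(\j) \equiv g_i(\j) + \phi_i \pmod{1},
\end{align*}
which I would solve by integrating along coordinate axes: define $\theta(\j)$ by telescoping along the path $\0 \to j_1 \e_1 \to j_1 \e_1 + j_2 \e_2 \to \cdots \to \j$. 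Path independence follows from commutativity of $X(\e_i)$ and $X(\e_j)$, forcing the consistency $g_i(\j+\e_j) - g_i(\j) \equiv g_j(\j+\e_i) - g_j(\j) \pmod 1$. The partial sums along each coordinate are evaluated exactly as in Cases~1 and~2 of Lemma~\ref{lem:cliffInDiag}: Faulhaber's formula handles coordinate-degree $a'_i \neq p-1$, while Lagrange interpolation over $\integer_{p^{m'}}$ handles the $a'_i = p-1$ case. The cyclic constraint $\sum_{j_i = 0}^{p-1}[\theta(\j+\e_i) - \theta(\j)] \equiv 0 \pmod 1$ together with Equation~(\ref{eq:sum_p-1}) pins each $\phi_i$ to a precision-$p^m$ phase, as in Equation~(\ref{eq:phi}).

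The main obstacle is the bookkeeping required to verify that each monomial produced by the integration still sits inside $\D_w$, i.e.\ that the pair (precision $m'$, degree multi-index $\a'$) continues to satisfy the weight bound after integration in every coordinate. The delicate step is the multivariate analogue of Case~2, where summing a coordinate-degree $p-1$ monomial naively introduces an irreducible factor of $p$ in the denominator and must instead be re-expressed via Lagrange interpolation as a $\integer_{p^{m'}}$-polynomial of coordinate-degree at most $p-1$; this is precisely why the recursive definition of $\D_{m,\a}$ includes the factor $\prod_\c \D_{m-1,\c}$ with $\wt{\c} \leq \wt{\a} + (p-1)$, and checking that every term produced by the telescoping falls into one of the allowed $\D_{m,\b}$ or $\D_{m-1,\c}$ factors is the real content of the proof.
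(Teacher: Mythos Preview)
Your proposal is correct and follows essentially the same strategy as the paper's proof: the paper also splits the theorem into two containment lemmas, handles the forward direction by conjugating each generator $U_{m,\a}$ by $X(\e_i)$ (with the same binomial-plus-wrap-around decomposition), and handles the reverse direction by writing $\Delta_i\theta$ in terms of $V\in\D_{w-1}$, pinning $\phi$ via the cyclic sum and Equation~(\ref{eq:sum_p-1}), and then integrating with Faulhaber's formula coordinate by coordinate. Your ``telescoping along the path $\0\to j_1\e_1\to\cdots\to\j$'' is exactly the paper's procedure of recovering the $j_1$-dependence first and then treating the leftover $\theta(0,j_2,\dots,j_n)$ by repeating the argument for the remaining $X(\e_j)$; the paper simply phrases this as ``constants that are actually functions of $j_2,\dots,j_n$'' rather than as a path integral, and your explicit path-independence remark is harmless but unnecessary since $\theta$ is given a priori.
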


As in the single qudit case, we shall break the proof of the theorem into two lemmas.
\begin{lemma}\label{lem:diagInnQuditCliff}
  For $w \in \natural$,
      \begin{align}
	\D_{w} \subseteq \cliff{w}_d ~.
      \end{align}
\end{lemma}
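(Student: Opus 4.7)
The plan is to mimic the single-qudit Lemma \ref{lem:diagInCliff} and induct on $w$. The base case $w=1$ is immediate since $\D_1$ is generated by the diagonal Paulis $Z(\e_i)$ and global phases, so $\D_1 = \cliff{1}_d$. For the inductive step, assume $\D_{w'} \subseteq \cliff{w'}_d$ for all $w' < w$. Because $\D_w$ is generated by the $U_{m,\a}$ with $(p-1)(m-1)+\wt{\a} = w$, together with global phases and the factors $\D_{m,\b}$, $\D_{m-1,\c}$ appearing in the recursive definition of $\D_{m,\a}$ (all of which, by the weight constraints in that definition, lie in $\D_{w'}$ for some $w' < w$ and hence by induction in $\cliff{w-1}_d \subseteq \cliff{w}_d$), it suffices to prove $U_{m,\a} \in \cliff{w}_d$ for each such generator.

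Since $U_{m,\a}$ is diagonal, $U_{m,\a} Z(\v) U_{m,\a}\conj = Z(\v)$, so the Clifford-hierarchy condition reduces to showing $U_{m,\a} X(\v) U_{m,\a}\conj \in \cliff{w-1}$ for every $\v \in \integer_p^n$. Because the $X(\e_i)$ generate the $X(\v)$ and because $\cliff{w-1}_d$ is (by hypothesis) a group closed under Pauli conjugation, the same generator-tracking argument as in the proof that $\cliff{k}_d$ is a group reduces the task to checking $\v = \e_i$ for each $i$. The central computation is then
\begin{align*}
U_{m,\a}\, X(\e_i)\, U_{m,\a}\conj = D_i \cdot X(\e_i),
\end{align*}
where $D_i$ is the diagonal unitary with phase $\exp\bigl(\tfrac{2\pi i}{p^m}[\phi_\a(\j) - \phi_\a(\j - \e_i)]\bigr)$ at $\ket{\j}$, $\phi_\a(\j) = \prod_l j_l^{a_l}$, and $\j - \e_i$ is computed in $\integer_p^n$.

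The analysis of $D_i$ splits into a bulk and a boundary piece. On the bulk $j_i \neq 0$ there is no wraparound, so the binomial expansion of $j_i^{a_i} - (j_i-1)^{a_i}$ gives a polynomial of degree $a_i - 1$ in $j_i$; multiplied by $\prod_{l\neq i} j_l^{a_l}$ this is a polynomial of total weight $\wt{\a} - 1$, placing the bulk contribution in $\D_{m,\b}$ for some $\b \leq \a$ with $\wt{\b} = \wt{\a} - 1$, at Clifford level $w-1$. At $j_i = 0$, the polynomial formula predicts $-(-1)^{a_i}$ while the true phase uses $-(p-1)^{a_i}$; the discrepancy $(-1)^{a_i} - (p-1)^{a_i}$ is divisible by $p$, so divided by $p^m$ the boundary correction lives at precision $m-1$. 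Writing the indicator $\delta_{j_i,0}$ as a degree-$(p-1)$ polynomial in $j_i$ and multiplying by $\prod_{l \neq i} j_l^{a_l}$, the boundary sits in $\D_{m-1,\c}$ for some $\c$ with $\wt{\c} \leq (p-1) + (\wt{\a} - a_i) \leq \wt{\a} + (p-1)$, of Clifford level $(p-1)(m-1) + \wt{\a} - a_i \leq w-1$ (using $a_i \geq 1$; the case $a_i = 0$ is trivial because $U_{m,\a}$ then commutes with $X(\e_i)$). Both pieces therefore lie in $\D_{w-1}$, hence in $\cliff{w-1}_d$ by induction, giving $D_i \in \cliff{w-1}_d$ and $U_{m,\a} \in \cliff{w}_d$.

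The main obstacle I expect is the careful tracking of the wraparound boundary term: the allowed weight range $\wt{\c} \leq \wt{\a} + (p-1)$ at precision $m-1$ in the definition of $\D_{m,\a}$ is exactly what is needed to absorb a boundary contribution of weight $(p-1) + \wt{\a} - a_i$ at precision $m-1$, and verifying the accompanying inequality $(p-1)(m-1) + \wt{\a} - a_i \leq w - 1$ is the heart of why the recursive definition of $\D_{m,\a}$ closes the induction.
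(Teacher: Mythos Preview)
Your argument is correct and follows essentially the same induction-on-$w$ strategy as the paper, with the same bulk/boundary decomposition of $U_{m,\a}X(\e_i)U_{m,\a}^\dagger$; the paper merely packages the computation as two cases $a_i>1$ versus $a_i=1$ rather than your unified treatment. One small imprecision worth noting: the recursive factor $\D_{m-1,\c}$ with $\wt{\c}=\wt{\a}+(p-1)$ sits at level exactly $w$, not at some $w'<w$ as you assert, so your reduction to ``it suffices to show $U_{m,\a}\in\cliff{w}_d$'' needs the observation (which the paper also leaves implicit) that such factors live at strictly smaller precision $m-1$, and hence are handled once every generator $U_{m',\a'}$ with $(m',\a')\in\S_w$ is placed in $\cliff{w}_d$.
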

\begin{proof}
  \textbf{Base case: }
  By definition
  \begin{align}
    \prod_{i=1}^{n} \D_{1,e_i} = \cliff{1}_d~.
  \end{align}
  Therefore,
  \begin{align}
    \D_{1,e_i} \subseteq \cliff{1}_d~.
  \end{align}

  \textbf{Inductive step: }
  For $w' < w$, suppose we have shown that
  \begin{align}
    \D_{w'} \subseteq \cliff{w'}_d~.
  \end{align}
  Let $m \in \natural$ and $\a \in \integer_{p}^{n}$ such that
  \begin{align}
  (p-1)(m-1) + \wt{\a} = w~.
  \end{align}

There are three components making up $\D_{m,\a}$.  The first component is $\gen{U_{m,\a}}$, with $(m,\a)$ satisfying the above constraint.  The second component is elements of $\D_{m,\b}$ and by the condition on $\b$, $(p-1)(m-1) + \wt{\b} = w' < w$.  Thus, $\D_{m,\b} \subseteq \D_{w'}$ and the inductive hypothesis implies that $\D_{m,\b} \subseteq \cliff{w'}_d$.  The third component is $\D_{m-1,\c}$ and
\begin{align}
(p-1)[(m-1)-1] + \wt{\c} &\leq (p-1)(m-1) + \wt{\a} \nonumber \\
&= w.
\end{align}
Therefore, to show that $\D_{m,\a} \subseteq \cliff{w}_d$, it suffices to show that $U_{m,a} \in \cliff{w}$.
 To this end, consider
  \begin{align}
    U_{m,\a} X(\e_1) U_{m,\a}\conj
    = V X(\e_1)~.
  \end{align}

  \textbf{Case 1:}
  If $a_1 > 1$, then it is straightforward to show as in lemma~\ref{lem:diagInCliff} that $V \in \D_{m,\b}$ where $\b  = \a - \e_1$. The inductive hypothesis guarantees $V \in \cliff{w-1}_d$.

  \textbf{Case 2:}
  If $a_1 = 1$, then we can show that $V$ is a product of two gates, $V_L$ and
  $V_R$. $V_L \in \D_{m,\b}$ where $\b = \a - \e_1$ as before; $V_R \in \D_{m-1,\c}$
  where $\c = (p-1,b_2,\dots,b_n)$.  Since $\wt{\c} = \wt{\a} + (p-2)$, $(m-1,\c) \in \S_{w-1}$.
  The product of $V_L$ and $V_R$ always lies in $\D_{w-1}$ and hence, by the
  inductive hypothesis, $V_L \cdot V_R \in \cliff{w-1}_d$.

The same argument works for conjugation of $X(\e_i)$ for $i \neq 1$.  If
  \begin{equation}
    U_{m,\a} X(\e_i) U_{m,\a}\conj
    = V_i X(\e_i)~,
  \end{equation}
then
\begin{equation}
    U_{m,\a} X(\v) U_{m,\a}\conj
    = \prod_i V_i^{v_i} X(\v)~.
\end{equation}
Since $\cliff{w-1}_d$ is a group, $\prod_i V_i^{v_i} \in \cliff{w-1}_d$ as well.
  This implies that
  \begin{align}
    U_{m,\a} \in \cliff{w}_d~.
  \end{align}

\end{proof}

\begin{lemma}\label{lem:nQuditCliffInDiag}
  For $w \in \natural$,
  \begin{align}
    \D_{w} \supseteq \cliff{w}_d ~.
  \end{align}
\end{lemma}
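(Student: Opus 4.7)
The plan is to induct on $w$, mirroring the single-qudit proof of Lemma~\ref{lem:cliffInDiag}. The base case $w=1$ is immediate, since every element of $\cliff{1}_d$ is a diagonal Pauli times a global phase and has exponent of the form $\frac{1}{p}\sum_i c_i j_i + \phi$, which lies in $\prod_i \D_{1,\e_i} \cdot \{e^{i\phi}\} \subseteq \D_1$.

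For the inductive step, assume $\D_{w'} \supseteq \cliff{w'}_d$ for all $w' < w$ and let $U = \sum_\j e^{2\pi i\,\theta(\j)} \kb{\j} \in \cliff{w}_d$ with $\theta(\0)=0$. For each $k \in \{1,\ldots,n\}$, the definition of $\cliff{w}_d$ supplies a $V_k \in \cliff{w-1}_d$ (absorbing the global-phase ambiguity into $V_k$) with $U X(\e_k) U\conj = V_k X(\e_k)$. Writing $V_k = \sum_\j e^{2\pi i\,v_k(\j)} \kb{\j}$, this produces the discrete recurrence
\begin{equation}
\theta(\j) - \theta(\j - \e_k) \equiv v_k(\j) \mod 1.
\end{equation}
The inductive hypothesis gives $V_k \in \D_{w-1}$, so $v_k$ takes the form $\sum_{(m',\a')} \mu^{(k)}_{m',\a'}\, \j^{\a'}/p^{m'}$ (plus a constant) with every appearing term satisfying $(p-1)(m'-1) + \wt{\a'} \leq w-1$.

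I would then reconstruct $\theta$ by telescoping along the coordinate-by-coordinate lattice path $\0 \to (j_1,0,\ldots,0) \to \cdots \to \j$, obtaining
\begin{equation}
\theta(\j) \equiv \sum_{k=1}^n \sum_{j_k'=1}^{j_k} v_k(j_1,\ldots,j_{k-1},j_k',0,\ldots,0) \mod 1,
\end{equation}
which is path-independent because the finite differences in different coordinates commute. Only the monomials of $v_k$ with $a_i' = 0$ for $i > k$ contribute, so each inner sum reduces to $\mu^{(k)}_{m',\a'}\, p^{-m'} \bigl(\sum_{j_k'=1}^{j_k} (j_k')^{a_k'}\bigr) \prod_{i<k} j_i^{a_i'}$. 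Applying Faulhaber's formula with the same case analysis as Lemma~\ref{lem:cliffInDiag} --- the Bernoulli-number denominators are coprime to $p$ when $a_k' \leq p-2$, yielding a polynomial in $j_k$ of degree $a_k'+1$ over $\integer_{p^{m'}}$, while for $a_k' = p-1$ the inner sum is an arbitrary function $\integer_p \to \integer_{p^{m'}}$ and hence a polynomial of degree at most $p-1$ over $\integer_{p^{m'}}$ --- I would verify that each resulting monomial $\j^{\a''}$ satisfies $(p-1)(m'-1) + \wt{\a''} \leq w$ and therefore sits in $\D_w$. The constant (global-phase) piece of each $v_k$ is handled by summing the recurrence around the cycle $j_k \in \integer_p$, which via $\sum_{j \in \integer_p} j^a \equiv -1 \mod p$ if $a = p-1$ and $0 \mod p$ otherwise pins its denominator down to an allowed $p^m$ with $(p-1)(m-1)+1 \leq w$, exactly as in the single-qudit argument.

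The main obstacle will be the bookkeeping of weights and precisions in the Faulhaber step: I need to check that a monomial of weight $\wt{\a'}$ and precision $p^{m'}$ in $v_k$ cannot, after integration in $j_k$ and substitution $j_i = 0$ for $i > k$, produce any monomial whose weight-plus-precision exceeds $w$. The $a_k' \leq p-2$ case raises the degree in $j_k$ by only one while leaving the precision at $p^{m'}$, and the $a_k' = p-1$ case keeps the degree in $j_k$ bounded by $p-1$ at the same precision, so the weight grows by at most one beyond $\wt{\a'}$; the inductive bound $(p-1)(m'-1) + \wt{\a'} \leq w-1$ then closes the gap. Once this accounting is in hand, the remainder of the proof is a direct multivariate transcription of Lemma~\ref{lem:cliffInDiag}.
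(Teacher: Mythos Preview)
Your proposal is correct and follows essentially the same approach as the paper: induction on $w$, apply the inductive hypothesis to write each $V_k \in \D_{w-1}$ as a polynomial phase, reconstruct $\theta$ by discretely integrating via Faulhaber's formula with the same Bernoulli-denominator case split, and pin down the global-phase constant by summing the recurrence around the $\integer_p$-cycle. The only cosmetic difference is that you make the multi-coordinate reconstruction explicit as a single telescoping sum along the lattice path $\0 \to (j_1,0,\ldots,0) \to \cdots \to \j$, whereas the paper integrates in $j_1$ first (leaving the ``constant'' $\theta(0,j_2,\ldots,j_n)$ undetermined) and then says to repeat the argument for the remaining $X(\e_j)$; these are the same telescoping, just organized in opposite coordinate order.
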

\begin{proof}
  \textbf{Base case: }
  By definition, $\prod_i \D_{1,\e_i} = \cliff{1}_d$ and therefore,
  \begin{align}
    \D_{1} \supseteq \cliff{1}_{d}~.
  \end{align}

  Suppose we have shown that for $w \in \natural$, $w' < w$ that
  \begin{align}
    \D_{w'} \supseteq \cliff{w'}_d ~.
  \end{align}
  Let $U \in \cliff{w}_d$.  It can be expressed as
  \begin{align}
   U = \sum_{\j} \exp\left( 2\pi i \cdot \theta(\j) \right) \kb{\j}~,
  \end{align}
  for some function $\theta$.

  For some $\phi \in [0,2)$, there exists an operator $V \in
  \cliff{w-1}_d$ such that
  \begin{align}
    UX(\e_1) U\conj &= e^{2\pi i\phi}V X(\e_1)~.
  \end{align}
  We begin by considering only the conjugation with $X(\e_1)$ for simplicity.

  Let $\Delta_i \theta$ denote the function
  \begin{align*}
    &\Delta_i \theta(j_1,\cdots,j_i,\cdots,j_n) = \\
      &\theta(j_1,\cdots, j_i, \cdots, j_n) - \theta(j_1,\cdots,j_i -1, \cdots,j_n)\\
    &\Delta_i \theta(j_1,\cdots,j_i = 0, \cdots, j_n) = \\
    &\theta(j_1,\cdots, 0, \cdots,j_n) - \theta(j_1,\cdots,p -1, \cdots,j_n)~.
  \end{align*}

  From our inductive assumption it follows that $V \in \D_{w-1}$.
  Hence, there exists $N \in \natural$ such that
  $V$ can be expressed as the product of unitaries $\{V_x\}_{x=1}^{N} \in
  \D_{w-1}$ where each unitary can be expressed as
  \begin{align}
    V_x = \sum\limits_{\j}\exp\left( \frac{2\pi i}{p^{m_x}} j_1^{b_{x,1}}\dots j_n^{b_{x,n}} \right)\kb{j}~,
  \end{align}
with $(m_x, \b_x) \in \S_{\alpha}$, $\alpha \leq w-1$.
  That is,
  \begin{align}
    (p-1)(m_x - 1) + \wt{\b_x} = \alpha < w~.
  \end{align}

  We can then express the polynomial $\Delta_1 \theta$ as
  \begin{align}\label{eq:recursen}
    \Delta_1 \theta(\j) = \phi + \sum_x \frac{1}{p^{m_x}}
    \mu_{m_x,\b_x} \cdot j_1 ^{b_{x,1}} \dots j_n^{b_{x,n}} \mod 1
  \end{align}
  for some constants $\mu_{m_x,\b_x} \in \integer_{p}$. We have ignored terms of the form
  \begin{align}
    \frac{1}{p^{n}} j^{\c}~,
  \end{align}
  where $n < m_x$ or $n = m_x$ and $\wt{\c} < \wt{\b_x}$.

As in the single-qudit proof (lemma~\ref{lem:cliffInDiag}), we find $\phi = u/p^m$, where $m = \max{m_x}$.  We again apply Faulhaber's result.  The argument is the same as the single-qudit case, but this time, we find multiple leading order terms in $\theta$:
  \begin{align}
    \theta(\j) =
    \sum_{x} \frac{1}{p^{\tilde{m}_x}} \alpha_{\tilde{\b}_x} j_1^{\tilde{b}_{x,1}}
    \dots j_n^{\tilde{b}_{x,n}} + \frac{uj_1}{p^m}.
  \end{align}
  for some constants $\alpha_{\a} \in \integer_{p}$ and tuples
  $(\tilde{m}_x,\tilde{\b}_x)$ such that either $\tilde{m}_x = m_x$ and $\tilde{b}_1
  = b_1 + 1$ or $\tilde{m}_x = m_x + 1$ and $\tilde{b_{1}} = 1$. This means that
  these tuples obey
  \begin{align}
    (p-1)(\tilde{m}_x - 1) + \wt{\tilde{\b}_x} = \alpha + 1 \leq w~.
  \end{align}

The other difference from the single-qudit case is that there are ``constants'' that appear in the proof of lemma~\ref{lem:cliffInDiag} which in the multiple-qudit case are actually functions of $j_2$ through $j_n$, just not $j_1$.  For most of these functions, their value in $\theta$ is fixed by the corresponding polynomials in $V$, and therefore they are polynomials in $\theta$ as well.  However, $\theta(0)$ disappears completely in $\Delta \theta$ and now cannot be absorbed into the global phase either. 

By repeating the argument for $X(\e_j)$ for $j \in [n]$, we find that $\theta(0)$ and therefore $U$ can be expressed as the product of unitaries
  $U_{\tilde{m}^j_y,\tilde{\b}^j_y}$ such that
  \begin{align}
    (p-1)(\tilde{m}^j_y -1) + \wt{\tilde{\b}^j_y} \leq w~.
  \end{align}
  Therefore,
  \begin{align}
    U \in \D_{w}~,
  \end{align}
  which implies
  \begin{align}
    \D_{w} \supseteq \cliff{w}_d
  \end{align}
  as desired.
\end{proof}

Again, we can express $\cliff{w}_d$ as a product of cyclic groups.
\begin{corollary}
Let
\begin{equation}
m_{w,\a} = \left\lfloor \frac{w - \wt{\a}}{p-1} \right\rfloor~.
\end{equation}
Then
\begin{equation}
\cliff{w}_d \cong U(1) \times \prod_{\a | \wt{\a} \leq w} \integer_{p^{m_{w,\a}}}~.
\end{equation}
\end{corollary}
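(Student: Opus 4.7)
The plan is to invoke Theorem~\ref{thm:diagIsCliffnQudit}, which identifies $\cliff{w}_d$ with the explicitly generated abelian group $\D_{w}$, and then to read off the cyclic factor decomposition directly from its defining generators. Two observations drive the argument: for fixed $\a$ the subgroups $\gen{U_{m,\a}}$ are nested as $m$ varies, so only the maximal admissible precision contributes a new cyclic factor; and generators associated to distinct $\a$ are mutually independent modulo global phases, thanks to the uniqueness of the polynomial representation of functions $\integer_{p}^{n}\to\integer_{p^{m}}$.

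First I would enumerate the allowed generators. For each $\a$ with $\wt{\a}\leq w$, the condition for $U_{m,\a}$ to already lie in $\D_{w}$ is $(p-1)(m-1)+\wt{\a}\leq w$, which is maximized at precisely $m = m_{w,\a}$. The identity $U_{m,\a}^{\,p} = U_{m-1,\a}$ (up to an irrelevant constant phase) guarantees that every $U_{m',\a}$ with $m'<m_{w,\a}$ lies inside $\gen{U_{m_{w,\a},\a}}$, so no independent factor is contributed at lower precision. I would then verify that $\gen{U_{m_{w,\a},\a}}$ is cyclic of order exactly $p^{m_{w,\a}}$ modulo the $U(1)$: evaluating the diagonal entry at any $\j$ with $j_{1}^{a_{1}}\cdots j_{n}^{a_{n}}\not\equiv 0\pmod{p}$ (for instance $j_{i}=1$ on the support of $\a$) yields a primitive $p^{m_{w,\a}}$-th root of unity, so no smaller power of this generator can be a scalar.

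Next I would establish independence across distinct $\a$. The key fact is that the monomials $\{j_{1}^{a_{1}}\cdots j_{n}^{a_{n}}: 0\leq a_{i}\leq p-1\}$ form a $\integer_{p^{m}}$-basis for the free module of functions $\integer_{p}^{n}\to \integer_{p^{m}}$, an $n$-variable analogue of the Lagrange interpolation argument already used in Section~\ref{sec:background}. Consequently, any relation of the form $\prod_{\a}U_{m_{w,\a},\a}^{\,k_{\a}} = e^{i\phi}\1$ forces, by reading off diagonal entries and comparing monomial coefficients in $\integer_{p^{m}}$ for $m = \max_{\a}m_{w,\a}$, that each $k_{\a}$ is a multiple of $p^{m_{w,\a}}$. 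Combined with the previous step, this produces the direct product decomposition $\prod_{\a\mid\wt{\a}\leq w}\integer_{p^{m_{w,\a}}}$, and the residual continuous degree of freedom (the constant component of the phase function) supplies the outer $U(1)$.

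I expect the main obstacle to be bookkeeping around the constant polynomial, namely the $\a=\0$ contribution and the freedom to add an overall phase. Specifically, one must verify that the value $\theta(\0)$ and any other purely constant shifts are cleanly absorbed into the $U(1)$ factor rather than doubling up as a spurious discrete cyclic factor; equivalently, one must quotient the generator lattice by the sublattice corresponding to global phases before identifying the $\integer_{p^{m_{w,\a}}}$ factors. This is a routine but careful check that parallels the ``without loss of generality $\theta(0)=0$'' step used in the proof of Lemma~\ref{lem:cliffInDiag}.
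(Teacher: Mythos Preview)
Your proposal is correct and follows essentially the same approach as the paper: both argue that $\gen{U_{m,\a}}\cong\integer_{p^{m}}$, that these subgroups are nested in $m$ for fixed $\a$ (so only the maximal admissible $m$ contributes), that they are independent across distinct $\a$, and that the global phase supplies the $U(1)$ factor. Your version is simply more explicit about the mechanisms behind these facts (the identity $U_{m,\a}^{\,p}=U_{m-1,\a}$ for nesting, the monomial-basis argument for independence), which the paper leaves implicit.
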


\begin{proof}
Again, $\gen{U_{m,\a}} \cong \integer_{p^m}$ and includes $\gen{U_{m',\a}}$ for all $m' < m$ but not $\gen{U_{m',\a'}}$ for $\a' \neq \a$.  Thus, each value of $\a$ with $\wt{\a} \leq w$ gives one factor of $\integer_{m_{w,\a}}$.  There is also a $U(1)$ factor from the global phase.
\end{proof}

\section{Conclusion}
\label{sec:conclusion}

We have given a complete characterization of the diagonal elements of the Clifford hierarchy in terms of polynomials and $p^m$-th roots of unity.  One interesting aspect of this result is that it shines light on the distinction between the qubit Clifford group and the qudit Clifford groups.  $\cliff{k}_d$ over qudits of dimension $p$ involves only $p$-th roots of unity for $k < p$.  It is only when $k=p$ do we need other roots of unity.  For qubits, this change is already appearing at $k=2$, the Clifford group, whereas for larger $p$ it is delayed into the more exotic higher levels of the Clifford hierarchy.

\section{Acknowledgements}
We would like to thank Mark Howard for discussions and pointing out an error in an
earlier version of this paper.  This research was supported in part by CIFAR and by
the Perimeter Institute for Theoretical Physics. Research at Perimeter Institute is
supported by the Government of Canada through the Department of Innovation, Science
and Economic Development Canada and by the Province of Ontario through the Ministry
of Research, Innovation and Science.

\bibliographystyle{abbrv}
\bibliography{references}

\end{document}